\documentclass[twoside]{article}

\usepackage[accepted]{aistats2019arxiv}
% If your paper is accepted, change the options for the package
% aistats2019 as follows:
%
\usepackage{amssymb}
\usepackage{amsmath}
\usepackage{amsthm}
\usepackage{algorithm}
\usepackage[noend]{algpseudocode}
\usepackage{graphicx}

\usepackage{apalike}
% If you use BibTeX in apalike style, activate the following line:
%\bibliographystyle{apalike}
\newtheorem{theorem}{Theorem}

\begin{document}

% If your paper is accepted and the title of your paper is very long,
% the style will print as headings an error message. Use the following
% command to supply a shorter title of your paper so that it can be
% used as headings.
%
%\runningtitle{I use this title instead because the last one was very long}

% If your paper is accepted and the number of authors is large, the
% style will print as headings an error message. Use the following
% command to supply a shorter version of the authors names so that
% they can be used as headings (for example, use only the surnames)
%
%\runningauthor{Surname 1, Surname 2, Surname 3, ...., Surname n}

\twocolumn[

\aistatstitle{Effective Resistance-based Germination of Seed Sets for Community Detection}

%\aistatsauthor{Anonymous authors}
%\aistatsaddress{Anonymous authors}]
\aistatsauthor{ Jonathan Eskreis-Winkler \And Risi Kondor}
\aistatsaddress{University of Chicago \And  University of Chicago} ]
\begin{abstract}

Community detection is, at its core, an attempt to attach an interpretable function to an otherwise indecipherable form. The importance of labeling communities has obvious implications for identifying clusters in social networks, but it has a number of equally relevant applications in product recommendations, biological systems, and many forms of classification. The local variety of community detection starts with a small set of labeled ``seed nodes," and aims to estimate the community containing these nodes. One of the most ubiquitous methods - based on its simplicity and efficiency - is personalized PageRank. The most obvious bottleneck for deploying this form of PageRank successfully is the quality of the seeds. We introduce a ``germination" stage for these seeds, where an effective resistance-based approach is used to increase the quality and number of seeds from which a community is detected. By breaking seed set expansion into a two-step process, we aim to utilize two distinct random walk-based approaches in the regimes in which they excel. In synthetic and real network data, a simple, greedy algorithm which minimizes the effective resistance diameter combined with PageRank achieves clear improvements in precision and recall over a standalone PageRank procedure.

\end{abstract}

\section{INTRODUCTION}

Nascent abilities to collect progressively more massive amounts of data continually motivate new approaches to large-scale data analysis. Recognizing that many data collection pursuits involve unstructured data, and that the data collected is of a scale so large that it is infeasible for a human being to label training data with their associated classes, the fields of unsupervised learning and semi-supervised learning gain new importance \cite{zhou_machine_2017,banerjee_graph-based_2017}. We focus on a contemporary challenge in the class of community detection (CD) problems. The aim of CD is to leverage the known features of a complex system - namely its nodes and edges - to estimate unobserved functions, such as class labelings, on the network. The foundation of network analysis in general and CD in particular is that the nodes and edges of a graph convey valuable information, information that can be deployed for practical applications. In the last decade, the CD problem has motivated solutions in a constantly growing set of fields ripe for application \cite{gargi_large-scale_2011,shai_case_2017,garcia_applications_2018}.

A proliferation of work in CD has given rise to the formulation of several distinct problems involving the detection of communities that have different starting assumptions and objectives. Namely, given a graph $G = (V,E)$, one type of community detection aims to develop a full partition of the nodes $V$ into non-overlapping groups \cite{andersen_local_2006,fortunato_community_2010}. Another type acknowledges that networks will not always admit partitions into disjoint subgraphs and so its objective is to detect possibly overlapping communities, maximizing graph coverage \cite{palla_uncovering_2005,xie_overlapping_2013}. A less ambitious type is the detection of a single community. This problem begins with a predefined set of seeds which are known to be members of the sought-after community \cite{mehler_expanding_2009,yang_defining_2015,hollocou_multiple_2018,van_laarhoven_generative_2018}. 

The term ``seed set expansion" can refer to two scenarios. The first scenario is an unsupervised problem that demands a graph partition and might be addressed with a two-step algorithm: it would first identify appropriate seeds and then use them to extrapolate towards a  partition of the graph into possibly overlapping communities \cite{whang_overlapping_2013,moradi_local_2014}. The second scenario is the detection of a community around a predefined seed set and is equivalent to the ``less ambitious" problem described above. In both scenarios a seed set is being expanded, but while the former is an unsupervised problem, the latter is semi-supervised. Our approach is addressing this last form of CD. 

In addition to the many forms of CD, the number of scoring functions by which these methods are evaluated has also grown rapidly. Central to any approach to CD is the specification of a scoring function to assess the degree to which community structure is present. The most popular and effective scoring functions are based either on maximizing triadic closures, i.e. the local clustering coefficient, or on minimizing conductance \cite{yang_defining_2015}.

Our work focuses on improving the state of the art in CD for seed set expansion around a single node, or a small set of nodes. A dominant force in this problem setting is a CD-oriented version of PageRank deemed ``personalized PageRank" (PPR). Whereas the original PageRank algorithm is a ranking of nodes' importance (using PageRank contributions) in a general graph, PPR focuses on ranking nodes with scores based on their proximity to a pre-defined seed set \cite{bahmani_fast_2010,whang_overlapping_2013,kloumann_block_2017}. Our contribution builds upon existing applications of PageRank by introducing an effective resistance-based \textit{germination stage} before the propagation of PPR weights begins. Specifically, we propose a two-step approach using PPR on a vector which distributes weight over a new, revised seed set. This set includes the original seed nodes, but it also adds a selection of nodes that are chosen to minimize the effective resistance (ER) diameter of the revised seed set. The stopping rule for identifying the respective points at which the germination stage and PPR are stopped is based on community scoring functions. 

The remainder of the paper is organized as follows: in Section \ref{background} we review the role that random walks play in the theoretical underpinnings of PageRank and ER. The manner in which these two measures of node similarity differ is illustrated and the rationale for valuing ER in CD is presented. In Section \ref{alg}, we present the germination stage algorithm along with its time complexity. We also make explicit the connection between our algorithm and minimizing the ER diameter. In Section \ref{experiments}, we provide synthetic and empirical data experiments to show the efficacy of our two-step approach in practice. We conclude in Section \ref{conclusion} by outlining the new ideas that this work motivates and anticipating developments in the near future for CD.

\section{BACKGROUND} \label{background}

\subsection{Conventions of Notation}
For simplicity, this paper considers undirected, unweighted graphs $G = (V,E)$. The number of nodes is $|V| = n$ and the number of edges is $|E| = m$. If for an edge $e\in E$, it is the case that $e = (u,v)$ for $u,v\in V$, then $u,v,$ may be denoted as $e_+$ and $e_-$ for efficiency. When indices of a vector $v$ are denoted, the first through $k^\mathrm{th}$ indices are $v[1:k]$ or the $k^\mathrm{th}$ index is $v[k]$.

\subsection{Problem Statement}

Given a seed set $S\subset V$ and assuming some ground truth community $C\subset V$ exists such that $S\subset C$, we are interested in estimating a community $\hat C\subset V$, such that $S\subset \hat C$ and $\hat C$ minimizes a loss function measuring community structure. $S$ may be a set of a single vertex in $V$ or a collection of vertices. Whereas some CD scenarios, such as the ``overlapping communities" problem, aim to achieve a high proportion of coverage on the entire network, the aim of this seed set expansion problem is to find a single high quality community subject only to the constraint that it cover the seeds. In this sense we are addressing a one-class classification problem.

There are two components to addressing this problem. The first is to define the objective function to be minimized which quantifies the quality of community structure. The second component is to achieve this minimum, or to approximately achieve it. Take for instance the scoring function $f_c: S \subset V \rightarrow \mathbb{R}$ which measures the conductance of a subgraph. The global objective function based on conductance is:
$$C^* = \mathrm{argmin}_{C\subset V} f_c(C)$$
This problem is NP hard, and so we must find some other recourse towards identifying a viable solution. Thus begins the second component of CD, devising a procedure for actually finding a subgraph with strong community structure.

\subsection{Random Walks and Community Detection}

Random walks on graphs are a fundamental key to understanding the relatedness between vertices. In a network analysis setting, it is frequently assumed that besides the edge set there are no known vertex-associated features. So, we are left to assume that any nodes' interrelatedness is determined by whether they are edge-connected, if their shortest path distance is small, or if it is ``easy" to travel from one node to another. This underlies the popularity of random walk-based kernels in graph-based learning problems.

It is natural to consider the wealth of information that random walks can provide for any problem in CD. In the absence of a single, agreed-upon definition for community structure, the basic idea of ``more relatedness inwards than outwards" aligns well with the fact that a random walk after $k$ steps is more likely to be at any individual node in its community than it is to be at any node in any other community. There are several manners in which the role of random walks may be formalized. One popular example is based on the PageRank vector. It produces an estimate of the stationary distribution of a random walk which starts at a set of chosen nodes. If there is truth to the idea that there is significantly more interrelatedness inside a community than at its boundary, then this PageRank vector should assign nodes inside the community higher scores than those which are outside of it. A different formulation is the commute time $C_{v_iv_j}$ between two nodes. The random walk interpretation of this quantity is the expected amount of time for a random walk beginning at $v_i$ to arrive at $v_j$ and then return to $v_i$. If at any stage of a random walk it is more likely for the walk to stay inside a community than move to a different one, then commute times within a community will typically be shorter than those which must cross the community boundary. 

\subsection{Personalized PageRank}

The PageRank algorithm was originally introduced as a means of providing rankings for the pages of the World Wide Web \cite{page_pagerank_1999}. In the two decades since its introduction, it has evolved as a technique for a wide set of problems in a number of applied fields. The specific development we focus on here is personalized PageRank (PPR), a variant of the PageRank algorithm where the initial mass is fixed at a single node or subset of the nodes $V_o \subset V$ (with zero mass elsewhere) \cite{fogaras_towards_2005,kloumann_block_2017}. This translates to an initial zero vector $x \in \mathbb{R}^n$, except that $\forall k \in V_o, \ x_k = \frac{1}{|V_o|}$. The PPR vector is then calculated based on this initial state; the degree to which two nodes $v_i,v_j \not\in V_o$ are more or less similar to the seed set $V_o$ is governed by means of their respective indices in the PageRank vector $x_{v_i},x_{v_j}$. This method is equivalent to finding the steady state of a random walk with restarts whose mass is initially distributed evenly at the set of nodes which comprise the seed set. Node $v$'s score thus quantifies how likely it is that a random walk which starts at the seed set will end at node $v$.

This procedure translates into a CD scheme when the indices of the PageRank contributions are listed in a vector $x_\mathrm{PageRank}$ (after having listed the indices of the vertices in $V_o$) in the order of their score ranking \cite{andersen_communities_2006}. The next step in relating this vector to a community is minimizing a scoring function $f: W\subset V \rightarrow \mathbb{R}$ which measures the strength of community structure. The chosen community is then $x_\mathrm{PageRank}[1:k_o]$ where $k_o=\mathrm{argmin}_{1\le k \le n} f\left(x_\mathrm{PageRank}[1:k_o]\right)$. 

\subsection{Effective Resistance}

By interpreting a general graph as an electric network, one can study graph properties by considering how electricity flows throughout the network. All edges in the graph are considered resistors (in the case of an unweighted graph, they have equal resistance) and the relationship between the resistance of an edge $r_e$, the electric current over the edge $f_e$, and the electric potential difference between the nodes connected by said edge $v_{e^+} - v_{e^-}$ are governed by Ohm's Law: 
$$f_e = \frac{v_{e^+} - v_{e^-}}{r_e}$$ 
That is, for a unit of current to flow across an edge, the potential difference at the edge's endpoints must be equal to the resistance. Summarized in matrix form, where $p \in \mathbb{R}^{n}$ is a vector of potential differences across edges and $B\in \{-1,0,1\}^{m\times n}$ is an incidence matrix of $G$, the flow over every edge is $f = B p$. A second property of an electric network is the flow conservation property. If a unit of electric current is sent from node $a$ to $b$, then at any vertex $v\not= a,b$ in the graph, the sum of the flow of all edges connected to $v$ is 0. Summarized in matrix form, where $\delta_i$ is the Dirac delta function, $B^Tf = \delta_a - \delta_b$. Noting that $BB^T = L_G$, the graph Laplacian of $G$, and using Ohm's Law with the flow conservation property, we find that $$B^TB p = L_G p = \delta_a - \delta_b \Longleftrightarrow p = L_G^\dagger(\delta_a - \delta_b)$$
This last formula provides a means of deducing the potential differences across all nodes for a given electric flow across the network. For a single unit of current to flow across an edge, the resistance of the edge must be matched by the potential difference of its endpoints. When the flow of $\delta_a - \delta_b$ is realized on the edge set of $G$, both the potential difference between $a$ and $b$ and the associated effective resistance (ER) between $a$ and $b$ will be $p[a] - p[b]$; the ER of edge $e$ is $$r^\mathrm{eff}_{ab} = (\delta_a - \delta_b)^TL_G^\dagger(\delta_a - \delta_b)$$ Said in a different way, seeing a graph's edge set as a collection of resistors, if the resistors were replaced by a single resistor that acted indistinguishably from the collection subject to any flow vector - in terms of the amount of current and potential difference between edge endpoints - the amount of resistance of said single resistor is $r^\mathrm{eff}_{ab}$. Although the ER does not only summarize the relationship between nodes that are connected by an edge, for the purposes of this investigation, it is only the ERs between nodes connected by an edge that will be used directly.

Another useful property of ER in an unweighted graph is that the ER between any two nodes that are edge connected is exactly equal to the probability of those two nodes being edge connected in a random sample from the uniform spanning tree distribution of $G$. Unfortunately, to reproduce the elegant demonstration of why this is the case would require prohibitively many lines of formulae. This fact in particular provides some useful intuition as to why the ER would be a helpful tool in CD. Suppose a graph has two distinct subgraphs which are complete, but do not connect to each other except through a single edge. The ER of the edge which connects the two clusters will be very high (it will in fact be equal to one) because it is not possible for a spanning tree of the graph to exist without including this edge. The ER of an edge between two nodes in the same cluster will be comparatively small, because there are a number of spanning trees of the graph that do not make use of this edge. The ER of edges is thus a reasonable means of detecting a graph cluster's boundary because the edges which traverse that boundary have higher ERs. One last popular perspective of ER is that the expected number of steps of a random walk beginning at $v_i$, reaching $v_j$, and then ending at $v_i$ (commute distance $C_{v_iv_j}$) is proportional to the ER (with constant $2m$).

\subsection{Limitations of Effective Resistance}\label{effres_limits}

While these facets of ER provide a solid grounding for using it as a tool to understand nodes' interrelatedness, the behavior of ER described above does not necessary scale to large graphs. Specifically, it has been shown for classes of random geometric graphs - specifically kNN-graphs, $\epsilon$-graphs, and Gaussian similarity graphs - that as the size of the graph $n\rightarrow\infty$, $C_{v_iv_j} \approx \frac{1}{d_{v_i}} + \frac{1}{d_{v_j}}$, the sum of the inverse of their degrees \cite{luxburg_getting_2010}. Luxburg \textit{et al.} (2010) describe in their paper ``Getting Lost in Space" that as the graph size grows, the number of paths between any two nodes increases and a random walk will tend to ``forget" where it began, ultimately depending only on the degrees of the points at which it starts and ends. This limitation has in fact motivated a number of modified resistance-based graph distances that circumvent the scalability issue \cite{luxburg_getting_2010,nguyen_new_2016}. ER's inability to scale to large datasets carries two significant caveats involving graph types and which scenarios the convergence to $\frac{1}{d_{v_i}}+\frac{1}{d_{v_j}}$ is weakest. Regarding the first point, the convergence results are proven only for graphs with a minimal degree that grows with the graph size. This is not an unreasonable assumption in kNN- or $\epsilon$-graphs generated from data, but it is not an assumption that matches the properties of empirical networks. Secondly, the limitations of ER as a graph distance metric are only shown to be problematic for summarizing global structure, such as for nodes that are not close in the shortest path distance. CD is primarily concerned with local properties of real-world graphs with strong power law distributions. Furthermore, the ERs that will be used in this study will concern only the ER between nodes that are edge connected, i.e. very local.

\section{A TWO-STEP COMMUNITY DETECTION ALGORITHM FOR SEED SET EXPANSION} \label{alg}

While many algorithms will address an unsupervised CD problem by first establishing seeds and then propagating from those seeds, in the semi-supervised context, the seeds are provided \textit{a priori}, and propagation from those seeds is what the community detector must accomplish. The algorithm we present here takes an alternative route. We introduce what we term a \textit{germination stage}. Just as the germination of a seed is its first step departing from seed form towards becoming a full-fledged plant, the germination of seed sets in CD converts the seed set into a more larger, more reliable set of nodes, from which a community may be cultivated. In this stage, the information contained in the initial seed set is culled and expanded - not to the expected community estimate, but to a richer seed set for the second stage of propagation. Our approach can be compared in spirit to optimization methods which provide a minimization algorithm with a ``warm start." 

After this first stage we take the traditional route of propagation using personalized PageRank. Our motivation for splitting the CD procedure into these stages is based on the observation that the better the quality of the seeds used for propagation, the better the output. Basing the first stage of seed germination on ER is a deliberate choice based on the theoretical properties of ER. Additionally, empirical evidence suggests that a greedy method using a resistance-based metric will initially choose community candidates at a lower false positive rate than those that would be chosen based on the PPR score vector.

\subsection{Finding the Set with Small Effective Resistance Diameter}

While the basic procedure of germinating a seed set and then propagating its information is quite general, we outline one specific implementation of this approach using the tools of ER and PageRank. The first step is to calculate pairwise ERs for every element of $E$. This can be done in two ways. The first is the inversion of the graph Laplacian. ER for any pair of nodes $u,v\in V$ can be computed from the inverse Laplacian as $$r^\mathrm{eff}_{uv} = L^\dagger_{uu} + L^\dagger_{vv} - 2L^\dagger_{uv}$$
The second approach is to sample uniform spanning trees from the graph at hand. The ER of an edge can be estimated as the proportion of the finite sample of spanning trees wherein said edge occurs. Depending on the situation, one of these two approaches might be preferable. For sparse, diagonally dominant matrices, fast solvers exist which can find $r^\mathrm{eff}_{uv}$ in $O\left(n\mathrm{log}(n)\mathrm{log}(\epsilon^{-1})\right)$ where $n$ is the number of nonzero entries and $\epsilon$ is an accuracy level \cite{schaub_sparse_2017}. Spanning trees may be sampled (without any restriction on the type of graph) in $O(n^\frac{5}{3}m^{1}{3})$ \cite{durfee_sampling_2017}. The practitioner should thus decide which approach to use based upon the degree of sparsity, the required accuracy, and the ease of implementation. Furthermore, we are intrigued by the possibility of sampling spanning trees in a more efficient manner that guarantees an approximate estimate of the ER with high probability. The outcome of these procedures will lead to an estimate of the ER for every edge that appears in the graph (in the case of spanning trees) or for every pairwise relationship between nodes (in the case of graph Laplacian inversion). For the purposes of our algorithm, though, it is sufficient, even in the second approach, to store and use only the ERs between nodes that are connected by an edge in $G$.

A greedy algorithm is then implemented to progressively grow the seed set with the objective of minimizing the ER diameter at every stage. The specifics are found in Algorithm \ref{alg:greedy}, where $R$ denotes a sparse matrix containing ERs between nodes that are connected by an edge, $\mathrm{seeds}$ are the initial seeds provided, and $f$ is a community structure scoring function, namely conductance.

\begin{algorithm}
\caption{Greedy Minimizer of ER Diameter}\label{alg:greedy}
\begin{algorithmic}[1]
\Procedure{GreedyMinimizer}{$R,\mathrm{seeds},f$} \label{alg:greedy:name}\Comment{Produce a useful ordering of the elements of $V$ w.r.t. seed nodes}
\State $\mathrm{cur\_comm} \gets \mathrm{seeds}$
\State $\mathrm{unused} \gets V - \mathrm{seeds}$
\State $s = \{\}$
\While{$\mathrm{len}(\mathrm{unused})>0 \ \wedge \mathrm{checkStop}(s)$}
\State $i = \mathrm{argmin}_{k \in \mathrm{unused}}\left( \mathrm{min}(R[k,\mathrm{cur\_comm}])\right)$ \label{lineref1}
\State $v = \mathrm{unused}[i] $
\State $\mathrm{cur\_comm} = \mathrm{cur\_comm} \cup \{v\}$
\State $s = s \cup f(\mathrm{cur\_comm})$
\State $\mathrm{unused}= \mathrm{unused} \setminus \{v\} $
\EndWhile\label{euclidendwhile}
\State \textbf{return} $\mathrm{cur\_comm}$
\EndProcedure
\end{algorithmic}
\end{algorithm}

Two points in the algorithm must be clarified: the stopping condition, and the complexity of line \ref{lineref1}. Regarding the stopping function, we employ a procedure introduced by Yang \textit{et al.} (2015) who conduct a ``sweep" of the PageRank vector and calculate the successive conductances of larger and larger subgraphs around the seed nodes; they consider the points where local minima occur to be community candidates. Since our objective in this stage of the algorithm is to preserve the purity of the germinated seed set as well as possible, we avoid false positives by always taking the first local minimum as a stopping point in the algorithm. Another convention of Yang \textit{et al.} (2015) was to disregard any local minima for which the scoring function does not subsequently increase 20\% before arriving at another local minimum. Since our procedure for identifying a local minimum must control the false positive rate in the germination stage, we use a stricter criteria of 5\%. In any case, this means that the $\mathrm{cur\_comm}$ returned by Algorithm \ref{alg:greedy} must be trimmed back to the point at which the local minimum actually occurred and cannot be used as is. Though it has not been observed to occur in practice, if it were the case that no local minimum meets this criteria, then the entire set of nodes would be returned.

Next, regarding the time complexity for an algorithm involving a search for the argument of the minimum of a set of values each of which is the minimum of another list, we show in Theorem \ref{thm:complexity} that the time complexity is $\mathcal{O}(1)$. We assume at the onset that the graph at hand has a maximum node degree of $d$, which grows at a slower rate than graph size. Similarly, we assume that the maximum community size $n_c$ is not a graph property that grows at the same rate as graph size. These are basic features of the Lancichinetti-Fortunato-Radicchi (LFR) graph model \cite{lancichinetti_benchmarks_2009}.

\begin{theorem} \label{thm:complexity}
The time complexity of Algorithm \ref{alg:greedy} is $\mathcal{O}(1)$.
\end{theorem}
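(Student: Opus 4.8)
The plan is to bound separately the number of iterations of the while loop and the cost of the single expensive step inside it --- the nested search on line~\ref{lineref1} --- and to argue that under the stated assumptions each is a constant independent of the graph size $n$. Throughout I will use that the maximum degree $d$ and the maximum community size $n_c$ are treated as fixed as $n\to\infty$.

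First I would observe that the search on line~\ref{lineref1} only ever ranges over the frontier of $\mathrm{cur\_comm}$. Because $R$ is stored sparsely, recording an effective resistance only between nodes joined by an edge, the inner quantity $\min(R[k,\mathrm{cur\_comm}])$ is finite exactly when $k$ is adjacent to some node already absorbed; for every other $k$ the relevant row is empty and yields no finite candidate. Hence each iteration appends a neighbor of the current set, $\mathrm{cur\_comm}$ stays connected, and the genuine candidate set lies inside the open neighborhood of $\mathrm{cur\_comm}$, whose cardinality is at most $|\mathrm{cur\_comm}|\cdot d$. Since the loop halts at the first qualifying local minimum of the conductance profile, which the community assumption places at the boundary of a set of size at most $n_c$, we have $|\mathrm{cur\_comm}|\le n_c$ throughout, so the candidate set has size at most $n_c d$.

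Next I would tally the cost. For one frontier node $k$, evaluating $\min(R[k,\mathrm{cur\_comm}])$ touches at most $\min(d,|\mathrm{cur\_comm}|)\le d$ stored resistances, so the full argmin on line~\ref{lineref1} costs $\mathcal{O}(n_c d^2)$ per iteration, and less if one maintains the per-candidate minima incrementally as nodes are absorbed. The number of iterations is at most $n_c$ by the same boundary argument. Multiplying gives a total of $\mathcal{O}(n_c^2 d^2)$, which is $\mathcal{O}(1)$ in $n$. I would also note that the bookkeeping lines should be read as operating on this frontier rather than on a materialized copy of $V\setminus\mathrm{seeds}$; maintaining the frontier as neighbors are revealed keeps every step free of any dependence on $n$.

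The main obstacle I anticipate is making the iteration bound fully rigorous. The clean count of $n_c$ iterations rests on the first-local-minimum stopping rule firing at or near the true community boundary, which is a structural feature of the LFR regime rather than a purely combinatorial fact about the loop; I would therefore invoke the assumption that the conductance profile attains its first qualifying (5\%) local minimum within $\mathcal{O}(n_c)$ steps of the seeds, treating the effective search region as a constant-size induced subgraph around them. A secondary point to handle cleanly is the convention that rows of $R$ for non-adjacent nodes are empty --- equivalently, infinite resistance --- so that the argmin cannot spuriously select a node disconnected from $\mathrm{cur\_comm}$.
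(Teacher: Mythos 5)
Your proposal is correct and follows essentially the same route as the paper's own proof: bound the number of while-loop iterations by $n_c$ via the early-stopping rule, bound the cost of each execution of line~\ref{lineref1} by a quantity depending only on $n_c$ and $d$, and conclude that the product ($\mathcal{O}(n_c^2 d^2)$ in your accounting, $d\,\frac{n_c(n_c+1)}{2}$ in the paper's) is constant in $n$ under the stated assumptions. Your explicit frontier/sparse-$R$ argument and your flagging of the iteration bound as an assumption about where the first qualifying local minimum occurs simply make explicit what the paper leaves implicit.
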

\begin{proof}
In general, finding the minimum of a list of length $n$ is of complexity $\mathcal{O}(n)$. In the worst case scenario of line \ref{lineref1} (when the graph is complete) the number of required operations would be: $$\sum_{a=1}^{n - \mathrm{len}(\mathrm{seeds})} a(n-a) = \sum_{a=1}^{n - \mathrm{len}(\mathrm{seeds})} an-a^2$$
Since $\mathrm{len}(\mathrm{seeds})$ is a small fraction of the entire community, this term may be approximated as $\sum_{a=1}^{n} an-a^2 = n \sum_{a=1}^na - \sum_{a=1}^n a^2$. In this formulation, both sums are on the order of $n^3$. In practice, we observe a local minimum to occur very early, after only considering $\approx \frac{n_c}{10}$ (though the actual proportion would depend on the parameter which describes the power law distribution of the vertices' degrees). But even if we are to consider the algorithm to run for $n_c$ steps, the total operations remain: $\sum_{i=1}^{n_c} id = d\frac{n_c(n_c+1)}{2}$. So, under our assumptions that we are in a sparse graph setting, where community sizes do not grow at the same rate as the number of nodes in the graph,
the complexity of Algorithm \ref{alg:greedy} is $\mathcal{O}(n_c^2) = \mathcal{O}(1)$.
\end{proof}

\subsection{Theoretical Properties of the Germinated Seed Set}

The objective of finding a subgraph which contains the seeds and also has the smallest ER diameter is the problem we wish to solve. There is even some recent work in graph clustering which attempts to find these subgraphs efficiently. One recent paper introduced a polynomial time algorithm to partition a graph into sets of nodes which have a bounded ER diameter \cite{alev_graph_2017}. This approach suggests an important inroad into graph clustering based on ER, but is not directly applicable to identifying a single community. It is also worth considering that the solution to the relaxed minimization problem is extremely fast and does not add much computational overhead to the use of PPR. That is, even with the computation of ERs, the germination of a seed set, and subsequent PPR, the overall complexity is still sub-quadratic. Furthermore, by demonstrating the efficacy of even an incredibly greedy algorithm for minimizing ER diameter, we aim to motivate more advanced methods for finding low ER diameter subgraphs which could be coupled with a subsequent stage of PPR.

That being said, we demonstrate that a bound exists at every step of the while loop of our algorithm. Specifically, the increase of ER diameter at each step is bounded above by the minimum of the ERs between all vertices already included and the newest addition.

\begin{theorem} \label{thm:greedy}
At each stage of the while loop in Algorithm \ref{alg:greedy:name}, the increase of the ER diameter of $\mathrm{cur\_comm}$ is bounded above by the minimum of the ERs between the newest node and the nodes in the current state of $\mathrm{cur\_comm}$.
\end{theorem}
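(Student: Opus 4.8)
The plan is to read the \emph{ER diameter} of a vertex set $S$ as $\mathrm{diam}(S) = \max_{u,w\in S} r^{\mathrm{eff}}_{uw}$, and to compare it with $\mathrm{diam}(S')$, where $S' = S\cup\{v\}$ is the set after one iteration of the while loop and $v$ is the node chosen on line \ref{lineref1}. Writing $\rho = \min_{u\in S} r^{\mathrm{eff}}_{vu}$ for the quantity named in the statement, the goal is to establish $\mathrm{diam}(S') - \mathrm{diam}(S)\le \rho$. As a sanity check on the base case, when $\mathrm{cur\_comm}$ is a singleton we have $\mathrm{diam}(S)=0$ and $\mathrm{diam}(S')=\rho$, so the bound holds with equality.

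First I would split on which pair realizes $\mathrm{diam}(S')$. Since $S'\supset S$, every pair inside $S$ is still present, so if the maximizing pair for $S'$ already lies in $S$ then $\mathrm{diam}(S')=\mathrm{diam}(S)$ and the increase is $0\le\rho$, which is immediate. The only substantive case is when the maximizer involves the new node, i.e. $\mathrm{diam}(S')=r^{\mathrm{eff}}_{vw}$ for some $w\in S$. In that case, let $u^\ast\in S$ attain $\rho=r^{\mathrm{eff}}_{vu^\ast}$. Invoking that $r^{\mathrm{eff}}$ is a metric on $V$ (the resistance distance), the triangle inequality gives
$$ r^{\mathrm{eff}}_{vw} \;\le\; r^{\mathrm{eff}}_{vu^\ast} + r^{\mathrm{eff}}_{u^\ast w} \;\le\; \rho + \mathrm{diam}(S), $$
where the second inequality uses $u^\ast,w\in S$, so that $r^{\mathrm{eff}}_{u^\ast w}\le\mathrm{diam}(S)$. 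Rearranging yields $\mathrm{diam}(S')-\mathrm{diam}(S)=r^{\mathrm{eff}}_{vw}-\mathrm{diam}(S)\le\rho$, as claimed. Note the inequality holds for \emph{every} $u\in S$, so one may take the minimum over whichever subset of $S$ is relevant (e.g. the edge-connected neighbors actually stored in the sparse matrix $R$), and the sharpest constant is obtained at the minimizer selected on line \ref{lineref1}.

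The main obstacle is justifying the triangle inequality itself; everything else is bookkeeping around the definition of diameter. I would either cite the classical fact that resistance distance is a genuine metric or supply a short argument. The tempting shortcut---writing $r^{\mathrm{eff}}_{ab} = \|(L^\dagger)^{1/2}(\delta_a-\delta_b)\|^2$ and appealing to the Euclidean norm---is a trap, since that exhibits $r^{\mathrm{eff}}$ as a \emph{squared} distance, and squared distances violate the triangle inequality in general; the correct statement is that $r^{\mathrm{eff}}$ itself (not its square root) is a metric. A clean self-contained route uses superposition of potentials together with the maximum principle: the potential for unit current $a\to c$ decomposes as the sum of the potentials for $a\to b$ and $b\to c$, and since each potential attains its extremes at its source and sink, one reads off $r^{\mathrm{eff}}_{ac}\le r^{\mathrm{eff}}_{ab}+r^{\mathrm{eff}}_{bc}$. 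This is consistent with the commute-time identity $C_{v_iv_j}=2m\,r^{\mathrm{eff}}_{v_iv_j}$ already recorded above.
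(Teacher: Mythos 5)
Your proof is correct and takes essentially the same route as the paper's: both reduce the claim to the fact that effective resistance is a metric and then apply the triangle inequality $r^{\mathrm{eff}}_{vw} \le r^{\mathrm{eff}}_{vu^\ast} + r^{\mathrm{eff}}_{u^\ast w} \le \rho + \mathrm{diam}(S)$ to the pair realizing the new diameter. If anything, your write-up is more complete, since the paper merely asserts the metric property and the final inequality, leaving the case analysis and the triangle-inequality step (as well as any justification of why $r^{\mathrm{eff}}$ satisfies the triangle inequality) implicit.
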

\begin{proof}
It must be noted first that $r^\mathrm{effRes}:V\times V\rightarrow \mathbb{R}$ is a distance metric and that $(V,r^\mathrm{effRes})$ is a metric space. Suppose that at step $i$ the newest candidate for community membership to $\mathrm{cur\_comm}$ is $w\in V$. This could be the case only if $\forall u\in V, \ \mathrm{min}\left(\{r^\mathrm{effRes}(u,v) \mid v \in \mathrm{cur\_comm}\}\right) = \mathrm{min}\left(\{r^\mathrm{effRes}(w,v) \mid w \in \mathrm{unused}, \ v \in \mathrm{cur\_comm}\}\right)$. $\mathrm{diam}^\mathrm{effRes}(\mathrm{cur\_comm}\cup \{w\}) \le \mathrm{diam}^\mathrm{effRes}(\mathrm{cur\_comm}) + \mathrm{min}\left(\{r^\mathrm{effRes}(u,v) \mid v \in \mathrm{cur\_comm}\}\right)$.
\end{proof}

\section{EXPERIMENTS} \label{experiments}

\subsection{Data Description and Approach}

Three different forms of network data are considered below. The first type is a network sampled from a hierarchically-structured stochastic block model (HSBM). The matrix of probabilities along with a sample realization are depicted for ease of visualization in Figure \ref{fig:hsbm}. This is not a network structure commonly found empirically, but we use this model as a demonstration for the way an ER method works and how we expect it to work in a general network. 

\begin{figure}[h]
\vspace{.3in}
\centering
\begin{tabular}{c c}
\includegraphics[scale=0.25]{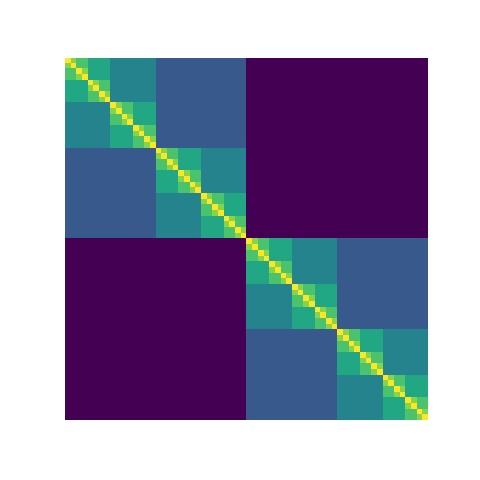} & \includegraphics[scale=0.25]{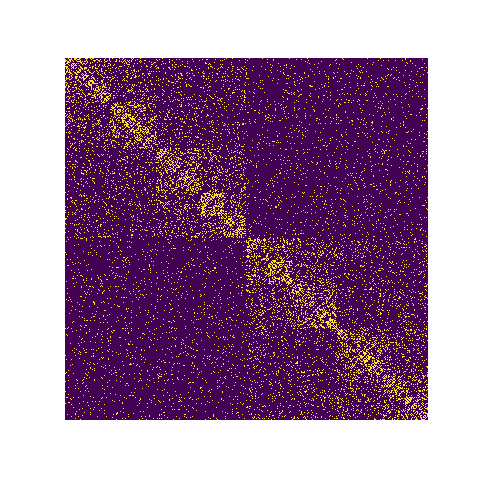}\\
\end{tabular}
\vspace{.3in}
\caption{A hierarchical SBM and its realization}\label{fig:hsbm}
\end{figure}

Next we present synthetic network data generated from the LFR graph model, a more realistic network model than the HSBM \cite{lancichinetti_benchmarks_2009}. This graph model assumes all the graph's vertices' degrees are distributed according to an exponential distribution. Depending on the parameter of this distribution, graphs may vary from a sparse ``hub and spokes" structure to a single cluster with a nearly uniform degree distribution. Using this data we show that the ER component of our method behaves differently depending on what proportion of the community is a high degree node. We also demonstrate that as long as the size of the community grows at a slower rate than the graph as a whole, the performance of our method does not degrade as a function of graph size. This is a specific concern of ours considering the limitations of using ER as a global similarity metric in large networks, as described in Section \ref{effres_limits}.

Lastly, we consider several large data benchmarks for CD. We selected some of the most commonly used networks with labeled ground-truth communities. Though these labelings are not perfect, these benchmarks elicit the closest match to real-world applications available. The specific datasets we consider are the blogs hyperlink network, Amazon co-purchasing network, the DBLP computer science co-authorship network, the Youtube social network, the Orkut social network, and web-linkages between Wikipedia pages \cite{snapnets}.

While many CD methods are validated by juxtaposing an estimated partition with the ground-truth community partition, in the one-community example considered in this paper, the objective is to maximize a combination of precision and recall for the single estimated community. We quantify our method next to traditional PPR using F1 scores. 
While the comparison of a community estimate to the ground truth is ultimately based on the F1 score or other scores calculated from precision and recall, we also provide a complete precision-recall curve for the entire sequence of the $\mathrm{cur\_comm}$ vector in order to profile whether a higher precision method is simply more conservative, or if it is uniformly outperforming a competitor. After all, the objective of our study is not specifically to produce an out-of-the-box cutting edge CD scheme, but to demonstrate how much the seed germination stage contributes before PPR begins. For this reason, our benchmark in these experiments is a vanilla PPR algorithm whose only parameter is $\alpha$, which determines how much weight is redistributed at the initial seed nodes in each iteration.

To conduct each experiment we select a community from to the ground truth labeling, and select a fixed number of seeds from that community. Then, we provide the network's edge set and its chosen seeds to the CD algorithm. When results are averaged, this reflects our having run the experiment independently on several randomly sampled synthetic networks, or our having queried a community estimate around different seeds or a different community label in each iteration.

\subsection{Hierarchical Stochastic Block Models}

The basic intuition that a network node will have lower resistance towards a high degree node than a low degree node is put to the test by this graph model. We show that in the extreme scenario where node degrees are distributed uniformly and there is sufficient within-community interconnectedness, ER will identify all community nodes reliably before making false positives. In contrast to this, we attribute ER's inability to be a standalone CD method in general to the relatively high ER between the hub of a community and its spokes relative to the resistance between the hub of one community and the hub of another. In this general case, only a few paths exist for a low degree node to the core of its community (corresponding to a high resistance path), though there may be more paths from one high degree node to another, even if they are found in distinct communities. In the HSBM setting, all the nodes' degrees are distributed uniformly in a small interval, avoiding this limitation. The ability of ER to perform and scale well in this setting (competitively with PPR), and not even be greatly improved on by a germination stage, is evident in Figure \ref{fig:hsbm:fig1}.

\begin{figure}[h]
\vspace{.3in}
\centering
\includegraphics[scale=0.5]{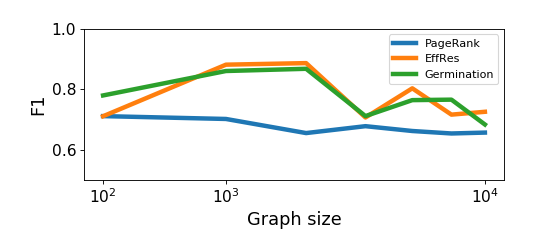}
\vspace{.3in}
\caption{F1 scores for networks based on the HSBM model as a function of graph size}\label{fig:hsbm:fig1}
\end{figure}

\subsection{Synthetic Network Data} \label{exp:synth}

The next network type used is the LFR model of benchmark graphs \cite{lancichinetti_benchmarks_2009}. In this model, the heterogeneity of the distribution of node degrees and community sizes are controlled by a power law distribution, along with a parameter governing how high a proportion of a node's edges are inward-facing versus outward-facing. In this more realistic setting, with parameters to tune governing what proportion of a network is ``popular" in its community, the contribution of ER to PPR can be assessed.

We show in this more realistic setting that pairing an ER-based germination stage with PPR outperforms PPR alone. In the top portion of Figure \ref{fig:synth} we show that this result is not dependent on the size of the graph and in the bottom portion we show that this result is not dependent on the parameter of the power law distribution of the model. Though it is apparent that graphs with fewer degrees overall pose more challenging CD tasks, the relative performance of our method is consistently superior.

\begin{figure}[h]
\vspace{.3in}
\centering
\begin{tabular}{c}
\includegraphics[scale=0.45]{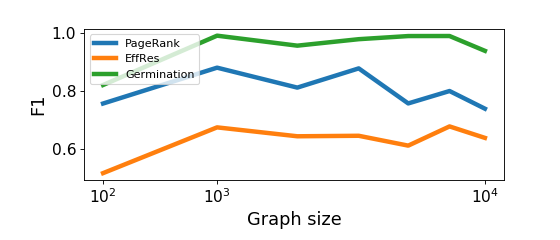} \\
\includegraphics[scale=0.45]{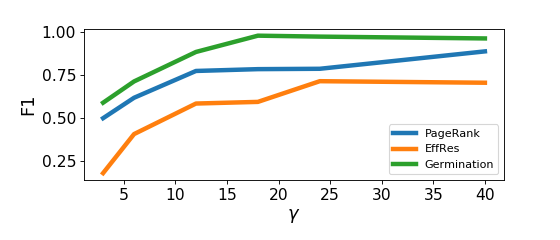} \\
\end{tabular}
\vspace{.3in}
\caption{F1 scores for networks based on the LFR model as a function of graph size (top) and power law parameter (bottom)}\label{fig:synth}
\end{figure}

\subsection{Empirical Network Data}

Lastly, we implement our CD approach on publicly available, large-scale, naturally occurring networks wherein seed set expansion is particularly relevant. To assess the general performance of these methods, we consider the mean F1 score over 10 experiments of sampling seeds from a fixed ground truth community. A summary of the (approximate) network sizes, and average F1 scores are presented in Table \ref{emp:table1}. ``Germ." in the table denotes using a germination stage pre-PPR. It is worth noting that the results are averaged over experiments of varying difficulty, which is addressed in the supplementary material. When the difference in F1 score across all experiments is considered between PPR and germination, PPR was improved upon 85\% of the time.

\begin{table}[h]
\caption{Mean of F1 scores on empirical networks} \label{emp:table1}
\begin{center}
\begin{tabular}{c c| c c }
\textbf{NAME}  & $n$ & \textbf{PPR} (sd) & \textbf{Germination} (sd)\\\hline
Blogs	&			$10^3$&	0.95 (0.015)	&	\textbf{0.96 (0.003)}		\\
Amazon         &$10^5$ &0.69 (0.07) &\textbf{0.74 (0.02)} \\
DBLP            &$10^5$ &0.58 (0.12)&\textbf{0.67 (0.13)}\\
Youtube             &$10^6$ &0.42 (0.04)&\textbf{0.49 (0.05)}\\
Orkut				& $10^6$ &			0.49 (0.21)							&		 \textbf{0.60 (0.23)}		\\
Wikipedia		& $10^6	$	& 			0.57 (0.25)				&		\textbf{0.68 (0.21)}					\\
\end{tabular}
\end{center}
\end{table}

For purposes of illustration, we take one instance of the CD task in the Amazon dataset and show the ROC curves for each CD method based on ground truth co-purchasing data. Each ROC curve depicts the precision-recall tradeoff for a progressively larger community estimate. A scoring function would determine which community estimate is selected. In Figure \ref{fig:synth:roc1} we see the primary trend in basing a PPR procedure on top of a germination stage. Specifically, we observe that the high precision of using ER alone is short lived. After attaining $\approx 25\%$ recall, the ER attains a local minimum, at which point the germination stage stops. This triggers PPR to be initiated on the ``germinated" seed set, leading to consistently better precision and recall, without regard for the point at which the stopping function is activated.

\begin{figure}[h]
\vspace{.3in}
\centering
\includegraphics[scale=0.45]{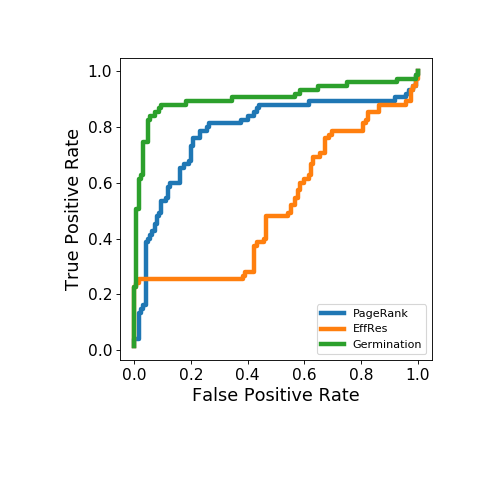}
\vspace{.3in}
\caption{ROC curve for a single run of CD}\label{fig:synth:roc1}
\end{figure}

\section{CONCLUSIONS} \label{conclusion}

We have shown that a simple modification to PPR can lead to a significant boost in accuracy for CD. While our algorithm's specific contribution to single community seed set expansion is promising, we are even more motivated by degree to which a two-step procedure is beneficial in a single community seed set expansion problem.

In the future, we plan to demonstrate further the power of this two-step approach. Possible directions include considering a larger class of community scoring functions, algorithms for locally minimizing ER diameter, and thoroughly investigating the accuracy-time tradeoffs of these decisions.

\subsubsection*{References}

\bibliographystyle{apalike}
\bibliography{ABC.bib}

%NOTES
% Figures I could produce but not sure if I will be able to muster:

% -- accuracy of each respective method as a function of graph size.

\end{document}